\mathchardef\mhyphen="2D
\DeclarePairedDelimiter{\ceil}{\lceil}{\rceil}
\newcommand{\bit}{\{0,1\}}
\newcommand{\seclam}{1^\lambda}
\newcommand{\negl}{\mathsf{negl}}
\newcommand{\poly}{\mathsf{poly}}
\newcommand{\N}{\mathbb{N}}
\newcommand{\Z}{\mathbb{Z}}
\newcommand{\Q}{\mathbb{Q}}
\newcommand{\Rq}{R_q}
\newcommand{\Zq}{\Z_q}
\newcommand{\st}{^*}
\newcommand{\set}[1]{\left\{#1\right\}}
\newcommand{\brackets}[1]{\ensuremath{\left( #1 \right)}}
\newcommand{\Brackets}[1]{\ensuremath{\left[ #1 \right]}}
\newcommand{\A}{\ensuremath{\mathcal{A}}}
\newcommand{\B}{\ensuremath{\mathcal{B}}}
\newcommand{\D}{\ensuremath{\mathcal{D}}}
\newcommand{\F}{\ensuremath{\mathcal{F}}}
\newcommand{\K}{\ensuremath{\mathcal{K}}}
\newcommand{\calO}{\ensuremath{\mathcal{O}}}
\renewcommand{\P}{\ensuremath{\mathcal{P}}}
\newcommand{\R}{\ensuremath{\mathcal{R}}}
\newcommand{\X}{\ensuremath{\mathcal{X}}}
\newcommand{\Y}{\ensuremath{\mathcal{Y}}}
\newtheorem{theorem}{Theorem}[section]
\newtheorem{claim}{Claim}[theorem]
\newtheorem{definition}{Definition}[theorem]
\newtheorem{lemma}{Lemma}[theorem]
\newtheorem{assumption}{Assumption}
\newlength{\protowidth}
\newcommand{\pprotocol}[5]{
{\begin{figure*}[#4]
\begin{center}
\setlength{\protowidth}{\textwidth}


        {
        \hrulefill \vspace{5pt}
        \small
        {\quad
        \begin{minipage}{\protowidth}
        \begin{center}
        {\bf #1}
        \end{center}
        #5

        \hrulefill

        \end{minipage}
        \quad}
        }

        \caption{\label{#3} #2}
\end{center}
\vspace{-4ex}
\end{figure*}
} }
\newcommand{\protocol}[4]{
\pprotocol{#1}{#2}{#3}{tbh!}{#4} }
\newcommand{\rlweqchi}{\mathsf{RLWE}_{R, q,\chi}}
\newcommand{\Rn}{R_n}
\newcommand{\Rnqn}{R_{n, q_n}}
\newcommand{\Rnq}{R_{n,q}}
\newcommand{\gentrap}{\textsc{GenTrap}}
\newcommand{\inv}{\textsc{Invert}}
\newcommand{\trap}{\tau}
\newcommand{\trapf}{\kappa}
\newcommand{\supp}{\textsc{Supp}}
\newcommand{\genf}{\textsc{Gen}_{\mathcal{F}}}
\newcommand{\invf}{\textsc{Inv}_{\mathcal{F}}}
\newcommand{\chkf}{\textsc{Chk}_{\mathcal{F}}}
\newcommand{\sampf}{\textsc{Samp}_{\mathcal{F}}}
\newcommand{\J}{\mathsf{BitDecomp}}
\newcommand{\bdx}{\overline{x}}
\newcommand{\Es}[1]{E_{#1}}
\renewcommand{\vec}[1]{\mathbf{#1}}
\newcommand{\mata}{\vec{a}}
\newcommand{\mate}{\vec{e}}
\newcommand{\cnt}{\mathsf{count}}
\newif\ifnotes\notestrue
\definecolor{mygrey}{gray}{0.50}
\newcommand{\notename}[2]{{\textcolor{mygrey}{\footnotesize{\bf (#1:} {#2}{\bf ) }}}}
\newcommand{\pnote}[1]{{\endnote{#1}}}
\newcommand{\notename}[2]{{}}
\newcommand{\pnote}[1]{}
\begin{document}

\title{Simpler Proofs of Quantumness}

\author{Zvika Brakerski\\Weizmann Institute of Science\\\texttt{zvika.brakerski@weizmann.ac.il}\thanks{Supported by the Binational Science Foundation (Grant No. 2016726), and by the European Union Horizon 2020 Research and Innovation Program via ERC Project REACT (Grant 756482) and via Project PROMETHEUS (Grant 780701).} 
\and Venkata Koppula\\Weizmann Institute of Science\\\texttt{venkata.koppula@weizmann.ac.il}\thanks{Supported by the Binational Science Foundation (Grant No. 2016726), and by the European Union Horizon 2020 Research and Innovation Program via ERC Project REACT (Grant 756482) and via Project PROMETHEUS (Grant 780701).} 
\and Umesh Vazirani\\University of California Berkeley\\\texttt{vazirani@cs.berkeley.edu} \thanks{Supported in part by ARO Grant W911NF-12-1-0541, NSF Grant CCF1410022, a Vannevar Bush faculty fellowship, and the Miller Institute at U.C. Berkeley through a
Miller Professorship.} 
\and Thomas Vidick\\California Institute of Technology\\\texttt{ vidick@caltech.edu} \thanks{Supported  by  NSF CAREER Grant CCF-1553477, AFOSR YIP award number FA9550-16-1-0495, a CIFAR Azrieli Global Scholar award,
  MURI Grant FA9550-18-1-0161, and the IQIM, an NSF Physics Frontiers
Center (NSF Grant PHY-1125565).}}

\date{}

\maketitle

\begin{abstract}
	A proof of quantumness is a method for provably demonstrating (to a classical verifier) that a quantum device can perform computational tasks that a classical device with comparable resources cannot. Providing a proof of quantumness is the first step towards constructing a useful quantum computer. 
	
	There are currently three approaches for exhibiting proofs of quantumness: $(i)$ Inverting a classically-hard one-way function (e.g.\ using Shor's algorithm). This seems technologically out of reach. $(ii)$ Sampling from a classically-hard-to-sample distribution (e.g.\ BosonSampling). This may be within reach of near-term experiments, but for all such tasks known verification requires exponential time. $(iii)$ Interactive protocols based on cryptographic assumptions. The use of a trapdoor scheme allows for efficient verification, and implementation seems to require much less resources than $(i)$, yet still more than $(ii)$. 
	
	In this work we propose a significant simplification to approach $(iii)$ by employing the random oracle heuristic. (We note that we \emph{do not} apply the Fiat-Shamir paradigm.)
	
	We give a two-message (challenge-response) proof of quantumness based on any trapdoor claw-free function. In contrast to earlier proposals we do not need an adaptive hard-core bit property. This allows the use of smaller security parameters and more diverse computational assumptions (such as Ring Learning with Errors), significantly reducing the quantum computational effort required for a successful demonstration. 
	
\end{abstract}

\section{Introduction}

Quantum computing holds a promise of a qualitative leap in our ability to perform important computational tasks. These tasks include simulation of chemical and physical systems at the quantum level, generating true randomness, algorithmic tasks such as factoring large numbers, and more. However, constructing a quantum computer with capabilities beyond those of existing classical computers is technologically challenging. Indeed, whether it is possible or not remains to be proven; such a ``proof'' is the focus of the ongoing race to construct a useful quantum device, with records for device size and functionality set at an increasing rate by the likes of Google, IBM, and the increasing number of startups heavily invested in this race. 
This notion, known as ``proof of quantumness'',\footnote{The term ``quantum supremacy'' is also used in the literature.} is generally viewed as a major milestone towards unlocking the powers of quantum computing. 
We can classify existing approaches towards proof of quantumness into three families:
\begin{enumerate}
    \item There are tasks that are generally believed to be classically intractable, and for which quantum algorithms are known; most notably the factoring and discrete logarithm problems \cite{Shor94}. Constructing a quantum computer that can factor beyond our classical capabilities would constitute a valid proof of quantumness. Alas, in order to implement the factoring algorithm on relevant input sizes one requires fault-tolerant quantum computation, which seems technologically out of reach (see e.g.~\cite{gidney2019factor} for recent and highly optimized estimates ranging in the millions of qubits).
    
    \item A different approach, introduced independently by Bremner, Josza and Shepperd~\cite{bremner2010classical} and by Aaronson and Arkhipov \cite{AA11}, is to use a quantum device to sample from distributions that are presumed to be hard to sample from classically. The intractability of classically achieving the task has not stood the same test of time as more established problems such as e.g.\ factoring, but can nonetheless be based on reasonable complexity-theoretic conjectures, at least for the problem of exact sampling. While quantum devices that can sample from these distributions appear to be ``right around the corner'', the real challenges are in (i) showing hardness of approximate sampling --- the quantum device will never be perfect --- and (ii) the classical verification: verification for these methods generally requires investing exponential classical computational resources, and can thus only be performed for fairly small input lengths.
    
    \item A new approach was recently proposed in \cite{BCMVV18}. They propose to use \emph{post-quantum cryptography}, namely to rely on cryptographic assumptions that cannot be broken even by the quantum device. Rather than verifying that the quantum device has the ability to break the assumption, cryptography is used to compel the device to generate a quantum superposition in a way that can be efficiently verified using a secret key. This method is inherently interactive, unlike the previous two, and requires at least four rounds of communication. As a cryptographic building block it uses trapdoor claw-free function families (recall that claw-freeness was originally introduced in the context of digital signatures and constructed based on factoring \cite{GMR84sigs}). In addition to claw freeness, the \cite{BCMVV18} approach also requires an additional adaptive hardcore bit property which appears to be hard to realize and is currently only known to be achievable based on the Learning with Errors (LWE) assumption \cite{Regev05}.
\end{enumerate}

The third approach is compelling in its ability to verify quantumness even of large quantum devices efficiently, but it still requires a large number of quantum operations. Furthermore, the interactive nature of the protocol requires the quantum device to retain a superposition while waiting for the verifier's second message (a single random bit).

In this work we simplify the \cite{BCMVV18} approach and allow for it to be based on a more diverse set of computational assumptions. This marks a step towards a protocol that can be realistically implemented on an actual quantum device, and can be efficiently verified on a classical computer. 

\paragraph*{Our Results.} We propose to use the \emph{random oracle heuristic} as a tool to reduce the round complexity of the proof of quantumness protocol from~\cite{BCMVV18}, making it into a simple one-round message-response protocol. We note that it is unlikely that a similar result can be achieved \emph{in the standard model} without introducing an additional hardness assumption. The reason is that a single-round message-response protocol in the standard model (i.e.\ without oracles) immediately implies that quantum samplers cannot be efficiently de-quantized (otherwise the protocol will have no soundness). Such a result therefore implies a (weak) separation between the BQP and BPP models. However, the LWE assumption does not appear to imply such a separation, and the current state of the art suggests that it is equally intractable in the quantum and classical settings.\footnote{This insight is due to a discussion with Omer Paneth.}

We show that using the random oracle heuristic, it is possible to implement the protocol in a single round while at the same time eliminating the need for an adaptive hard-core bit property, and thus relying on any family of claw-free functions. In particular, we propose a construction of trapdoor claw free functions which is analogous to that of \cite{BCMVV18} but relies on the Ring-LWE assumption \cite{LPR10,LPR13}. Ring-LWE based primitives are often regarded as more efficient than their LWE-based counterparts since they involve arithmetic over polynomial rings, which can be done more efficiently than over arbitrary linear spaces. Despite the similarity between LWE and Ring-LWE, proving an adaptive hard-core theorem for the latter appears to be a challenging task. This is since the LWE-based construction uses a so-called lossiness argument that is not known to be replicable in the Ring-LWE setting. We note that we can also instantiate our method using ``pre-quantum'' cryptography since soundness should hold only with respect to classical adversaries. Using a back-of-the-envelope calculation we estimate that it is possible to execute our protocol using superpositions over $\sim 8 \lambda\log^2\lambda$ qubits, for security parameter $\lambda$ and the adversary would have advantage negligible in $\lambda$.

While we allow the use of trapdoor claw-free families based on arbitrary assumptions, which should  allow for better security/efficiency trade-offs, our protocol still requires the quantum device to evaluate the random oracle on a quantum superposition, which could potentially create an additional burden. We point out that current and future heuristic instantiations of the random oracle model using explicit hash functions are assumed to enjoy efficient quantum implementation. Specifically, in evaluating the \emph{post-quantum} security level of cryptographic constructions (e.g.\ for the NIST competition \cite{NIST}), security is evaluated in the Quantum Random Oracle model where adversaries are assumed to evaluate hash functions on superpositions as efficiently as they do classically. Granted, this is just a model for an adversary, but it is customary to try to be as realistic as possible and not over-estimate the power of the adversary. We therefore consider the evaluation of the random hash function as a relatively lower-order addition to the cost of performing the quantumness test. 

Lastly, we compare our method to the most straightforward way to employ a random oracle for the purpose of round reduction, the Fiat-Shamir transform \cite{FiatShamir}. The basic protocol of \cite{BCMVV18} contains $4$ messages, where the third message is simply a random bit. One can therefore do parallel repetition of the protocol (though the soundness of this transformation needs to be shown),\footnote{Very recently, two concurrent works by Alagic et al.~\cite{ACGH19} and Chia et al.~\cite{CCY19} showed that parallel repetition of Mahadev's protocol indeed achieves negligible soundness error.} and apply Fiat-Shamir to compress it into challenge-response form. Furthermore, for proofs of quantumness soundness is only required to hold against a classical adversary, so the standard security reduction for Fiat-Shamir should hold. This approach only requires to apply the random oracle to a classical input. However, it still requires the adaptive hard-core bit property and is therefore restricted to the LWE assumption. We believe that our protocol, being of a somewhat modified form compared to prior works, may be useful for future applications.

\paragraph*{Our Technique.} At a high level, a family of trapdoor claw free functions allows to sample a function $f: \{0,1\} \times \{0,1\}^n \to \{0,1\}^n$ together with a trapdoor.  The function has two branches $f(0,\cdot), f(1, \cdot)$ which are both injective, i.e.\ permutations (this is a simplified description, actual protocols use a relaxed ``noisy'' notion). It is guaranteed that it is computationally intractable to find a collision (``claw'') $x_0, x_1$ s.t.\ $f(0,x_0)=f(1,x_1)$, however given the trapdoor it is possible to find for all $y$ the preimages $x_0, x_1$ s.t.\ $f(0,x_0)=f(1,x_1)=y$.

The \cite{BCMVV18} protocol sends a description of $f$ to the quantum device, asks it to apply $f$ on a uniform superposition of inputs and measure the image register, call the value obtained $y$. The quantum device is then left with a uniform superposition over the two preimages of $y$: $(0,x_0)$ and $(1,x_1)$. The value $y$ is sent to the verifier who challenges the quantum device to measure the remaining superposition on inputs in either the standard or Hadamard basis. A classical adversary that can answer each query independently must also be able to answer both at the same time, which is ruled out by the adaptive hard core property.

We propose to enable the quantum device to generate a superposition over $(0,x_0, H(0,x_0))$ and $(1,x_1, H(1,x_1))$, where $H$ is a one-bit hash function modeled as a random oracle. This can be done in a straightforward manner, similar to the previous method. The device is then asked to measure the resulting state in the Hadamard basis (always), and send the outcomes obtained to the verifier.\footnote{In fact we use a slight variant of this protocol, since measuring the $H$ part in Hadamard basis has probability $1/2$ of erasing the information on that bit. Instead we append the $H$ values directly to the phase. This is immaterial for the purpose of this exposition.} Since the device makes a single measurement, there is no need for a challenge from the verifier,  which effectively collapses the protocol to two messages. A quick calculation shows that the verifier receives a bit $m$ and vector $d$ s.t.\ in the case of a honest behavior the equation $m = d\cdot(x_0 \oplus x_1) \oplus H(0,x_0) \oplus H(1,x_1)$ holds. Finally, the verifier uses the trapdoor to recover $x_0, x_1$ from $y$ and checks that the equation is satisfied. The crux of the security proof is that a classical adversary cannot query the oracle at both $(0,x_0)$ and $(1,x_1)$, otherwise it would have been able to find a claw and break the cryptographic assumption. Therefore at least one value out of $H(0,x_0)$ and $H(1,x_1)$ remains random, and thus the adversary cannot compute $m, d$ that adhere to the required equation with probability greater than $1/2$. The proof thus follows from a simple extraction-style argument. In our main protocol, we use parallel repetition to argue that no prover can succeed with non-negligible probability. 

\paragraph*{Discussion on the `Random Oracle' Heuristic} As discussed above, the Fiat-Shamir heuristic can be used for the quantum supremacy protocol of Brakerski et al.~\cite{BCMVV18}. However, this would mean that the resulting scheme would require stronger assumptions (in particular, it would require noisy TCFs with the adaptive hardcore bit property). Secondly, starting with the work of Canetti et al.~\cite{CGH04}, many works have shown uninstantiability of the random oracle. These works show certain cryptographic primitives which are secure in the random oracle model, but are broken when instantiated by any concrete hash function. However, these constructions are very contrived, and in particular, do not apply to our protocol. 

\paragraph*{Efficiency of our Protocol, and Comparison to Previous Approaches}

We would like to emphasize that at the current level of maturity of quantum technology, any estimate of `practical advantage' would be educated guesswork at best. The technology for any option is far from being available and it is hard to predict the direction that technology will take, and as a consequence the practical cost of implementing certain operations.

This state of affairs, we believe, highlights the importance of developing multiple approaches to tasks such as proof of quantumness. This way, an assortment of solutions will be ready to accommodate the different directions that technology may lead.

A second point that we wish to highlight before getting into technical calculations, is that our approach allows to use \emph{any} family of trapdoor claw free permutations (and as we point out, for proofs of quantumness even `pre-quantum' candidates will suffice, e.g. if a candidate can be devised based on DDH in EC groups). This means that our back of the envelope calculation only refers to one specific way of using our scheme. Currently, we do not know any candidates for trapdoor claw free permutations based on such `pre-quantum' assumptions. 

Our protocol can be executed using a quasi-linear number of qubits and, with the proper choice of candidate for the hash function, has quasi-linear computational complexity. 

\emph{Comparison with \cite{BCMVV18}}: Since we do not require the hardcore bit property, our input dimension $n$ is smaller by a factor of at least $60 \log(\lambda)$. This follows due to Lemma 4.2 in \cite{BCMVV18}. Also, note that the parameter $q$ must also grow, hence the overall number of qubits required to implement the protocol in \cite{BCMVV18} is $O(\lambda \log^3(\lambda))$, at least $100 \log(\lambda)$ times more. Secondly, since \cite{BCMVV18} is a four-round protocol, the prover must maintain its quantum state until it receives a challenge from the verifier.

\emph{Comparison to discrete log via Shor's algorithm}: Let $n$ denote the number of bits required for representing the group elements. The current estimates for the number of qubits required for discrete log are ~$3n$, while the number of quantum gates required is ~$0.3 n^3$ (see \cite{gidney2019factor}). Similar to Shor's algorithm for factoring/discrete log, our protocol is also a non-interactive one (that is, the verifier sends a challenge, and the prover responds with an answer).

\paragraph*{Open Problems.} Our work suggests a number of open problems in the context of utilizing random oracles in the regime of classical verification of quantum computation. Most desirably, whether it is possible to use the random oracle in order to eliminate the need for other assumptions, or at least the need for a trapdoor. Obtaining a publicly verifiable protocol is a highly desirable goal. We can also wonder whether our protocol can be used for the purposes of certifying randomness or verifying quantum computation. In the plain model, the adaptation of the proof of quantumness for these purposes was far from trivial and yet the protocol itself is almost identical. Improving the state of the art in certifying randomness and in verifiability using random oracles (or using other methods) is also an interesting open problem.


\section{Preliminaries}
\label{sec:prelims}

\subsection{Notations}

For an integer $n$ we write $[n]$ for the set $\{1,\ldots,n\}$.
For any finite set $X$, let $x\gets \X$ denote a uniformly random element drawn from $X$. Similarly, for any distribution $\D$, let $x \gets \D$ denote a sample from $\D$. For an element $x\in X$ we write $\J(x)$ for an arbitrarily chosen but canonical (depending only on the implicit set $X$) binary representation of $x$. For any density function $f$ on domain $X$, let $\supp(f)$ denote the support of $f$; that is $\supp(f) = \set{x \in X : f(x) > 0}$.

For density functions $f_1, f_2$ over the same finite domain $X$, the Hellinger distance between $f_1$ and $f_2$ is $$ H^2(f_1, f_2) = 1 - \sum_{x \in X} \sqrt{f_1(x) f_2(x)}.$$

The total variation distance between $f_1$ and $f_2$ is $$\norm{f_1 - f_2}_{\mathrm{TV}} = \frac{1}{2} \sum_{x \in X} |f_1(x) - f_2(x)| \leq \sqrt{2H^2(f_1, f_2)}.$$

The following lemma relates the Hellinger distance and the trace distance of superpositions.

\begin{lemma}
    Let $X$ be a finite set and $f_1, f_2$ two density functions on $X$. Let $$ \ket{\psi_1} = \sum_{x \in X} \sqrt{f_1(x)} \ket{x}, \text{ and } \ket{\psi_2} = \sum_{x \in X} \sqrt{f_2(x)} \ket{x}.$$  
    Then
    $$\norm{\ket{\psi_1} - \ket{\psi_2}}_{\mathrm{tr}} \leq \sqrt{1 - (1 - H^2(f_1, f_2))^2}. $$
\end{lemma}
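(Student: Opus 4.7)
The plan is to derive the inequality (in fact, an equality) by computing $\langle \psi_1 | \psi_2 \rangle$ directly from the definition of the Hellinger distance and then invoking the standard identity for the trace distance between two pure states. First, expanding the inner product term by term,
$$\langle \psi_1 | \psi_2 \rangle \;=\; \sum_{x \in X} \sqrt{f_1(x)\,f_2(x)} \;=\; 1 - H^2(f_1, f_2),$$
by the definition of $H^2$. Since $f_1, f_2$ are probability densities, each summand is nonnegative and both $\ket{\psi_1},\ket{\psi_2}$ are unit vectors, so $\langle \psi_1 | \psi_2 \rangle$ is real and lies in $[0,1]$.

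Second, I would invoke the well-known pure-state trace-distance identity: for any two normalized vectors $\ket{\varphi_1},\ket{\varphi_2}$,
$$\tfrac{1}{2}\,\big\| \ket{\varphi_1}\bra{\varphi_1} - \ket{\varphi_2}\bra{\varphi_2} \big\|_{1} \;=\; \sqrt{1 - |\langle \varphi_1 | \varphi_2 \rangle|^{2}}.$$
The standard proof observes that $\ket{\varphi_1}\bra{\varphi_1} - \ket{\varphi_2}\bra{\varphi_2}$ is a traceless Hermitian operator whose range is contained in the two-dimensional subspace $\mathrm{span}\{\ket{\varphi_1},\ket{\varphi_2}\}$; restricting to that subspace and diagonalizing the resulting $2\times 2$ matrix yields eigenvalues $\pm \sqrt{1 - |\langle \varphi_1|\varphi_2\rangle|^{2}}$, and the trace norm is the sum of their absolute values.

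Combining the two steps with $|\langle \psi_1|\psi_2\rangle|^2 = (1 - H^2(f_1,f_2))^2$ yields the stated bound (in fact with equality). The only place that required any thought is the matching of notation: the quantity $\|\ket{\psi_1} - \ket{\psi_2}\|_{\mathrm{tr}}$ should be interpreted as the trace distance between the density matrices $\ket{\psi_1}\bra{\psi_1}$ and $\ket{\psi_2}\bra{\psi_2}$ (rather than a norm applied to a vector). Under that interpretation the argument above gives the lemma directly, so there is no serious obstacle — the real content is the identification $\langle \psi_1|\psi_2\rangle = 1 - H^2(f_1,f_2)$, with the rest being a textbook computation.
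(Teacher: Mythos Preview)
Your argument is correct and is the standard way to establish this fact: identify $\langle \psi_1|\psi_2\rangle = 1 - H^2(f_1,f_2)$ and then apply the pure-state trace-distance identity $\tfrac12\|\,|\varphi_1\rangle\langle\varphi_1| - |\varphi_2\rangle\langle\varphi_2|\,\|_1 = \sqrt{1-|\langle\varphi_1|\varphi_2\rangle|^2}$. The paper does not supply its own proof of this lemma---it is stated as a preliminary fact without argument---so there is nothing to compare against; your derivation is exactly the textbook one, and your remark that the inequality is in fact an equality (under the interpretation of $\|\cdot\|_{\mathrm{tr}}$ as the trace distance between the associated rank-one projectors) is also correct.
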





\subsection{Ideal Lattices}

In this section, we present some background on ideal lattices, the truncated discrete Gaussian distribution and the Ring Learning with Errors problem. For a positive integer $B$, modulus $q$, and dimension $n$, the truncated discrete Gaussian distribution is a distribution with support $\set{x \in \Zq^n : \norm{x} \leq B\sqrt{n}}$ defined as follows: 
$$ D_{\Zq^n, B}(x) = \frac{\exp(-\pi \norm{x}^2/B^2)}{\sum_{z \in \Zq^n, \norm{z} \leq B\sqrt{n}} \exp(-\pi |z|^2/B^2)}.$$ 

The Ring Learning with Errors (RLWE) assumption\cite{LPR13} is parameterized by a ring $R$, modulus $q \in \N$ and a noise distribution $\chi$. Informally, the assumption states that given many samples of the form $(a, a\cdot s + e)$ where $s$ is fixed for all samples, $a$ is chosen uniformly at random and $e$ is chosen from the error distribution $\chi$ for each sample, it is hard to compute $s$. The formal definition is given below. Here, we restrict ourselves to a special family of \emph{cyclotomic rings}.

\begin{assumption}
	Let $n$ be a power of two, $f_n(X) = X^n + 1$ an irreducible polynomial over $\Q[X]$ and $\Rn = \Z[X]/(f_n(X))$. Let $q = \set{q_n}_{n \in \N}$ be a family of moduli, $\Rnqn = \Rn/q_n \Rn = \Z_{q_n}[X]/(f_n(X))$ the quotient space, and $\chi = \set{\chi_n}_{n \in \N}$ a family of error distributions, where $\chi_n$ is a distribution over $\Rnqn$. For any secret $s$ in $\Rnqn$, let $\calO_s$ denote the oracle that, on each query, chooses $a \gets \Rnqn$, $e \gets \chi_n$ and outputs $(a, a\cdot s + e \mod q_n)$. The Ring Learning with Errors assumption $\rlweqchi$, parameterized by the family of rings $\set{\Rn}_{n = 2^k, k \in \N}$, moduli family $q$ and distribution family $\chi$, states that for any PPT adversary $\A$, there exists a negligible function $\negl(\cdot)$ such that for all security parameters $n = 2^k, k \in \N$, $$ \Pr[s \gets \A^{O_s()}(1^n) : s \gets \Rnqn ] \leq \negl(n).$$ 
\end{assumption}

Given many samples $\set{a_i, a_i \cdot s + e_i}_{i}$, one can efficiently find $s$ using a \emph{trapdoor} for the public elements $\set{a_i}_i$. There exists a sampling algorithm that can sample $\set{a_i}_{i}$ together with a trapdoor $\trap$, and an inversion algorithm that uses $\trap$ to extract $s$ from the set of evaluations $\set{a_i \cdot s + e_i}_{i}$. Without the trapdoor, the public elements $\set{a_i}_i$ look uniformly random. 

\begin{theorem}[Theorem 5.1 of \cite{MP12} in the Ring setting]
	\label{thm:trapdoor}
	Let $n, m, q$ be such that $n$ is a power of $2$,  $m = \Omega(\log q)$. There is an efficient randomized algorithm $ \gentrap$ that takes as input $(1^n, 1^m, q)$, and returns $ \mata = \brackets{a_i}_{i} \in \Rnq^m$ and a trapdoor $\trap$ such that the distribution of $\mata$ is negligibly (in $n$) close to the uniform distribution over $\Rnq^m$. Moreover, there is an efficient algorithm $\inv$ and a universal constant $C_T$ such that the following holds with overwhelming probability over the choice of $(\mata, \trap) \gets \gentrap(1^n, 1^m, q)$: 
	$$ \text{for all } s \in \Rnq, \mate \text{ such that }\norm{\mate} \leq \frac{q}{C_T\sqrt{n \log q}},  \inv(\mata, \trap, \mata\cdot s + \mate) = s. $$ 
\end{theorem}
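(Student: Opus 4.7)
The plan is to adapt the Micciancio--Peikert gadget-based trapdoor construction to the power-of-two cyclotomic ring $\Rnq$. Set $k = \lceil \log q \rceil$ and $\bar{m} = m - k$, and take as the \emph{gadget} vector $\vec{g} = (1, 2, 4, \ldots, 2^{k-1}) \in \Rnq^k$; given any $\vec{g}\cdot s + \vec{e}'$ with every coefficient of each entry of $\vec{e}'$ of absolute value strictly less than $q/4$, one can recover $s \in \Rnq$ by standard coefficient-wise bit-by-bit rounding, independently on each of the $n$ coefficients of $s$. On input $(1^n, 1^m, q)$, $\gentrap$ samples $\bar{\mata} \gets \Rnq^{\bar{m}}$ uniformly and a short $\vec{R} \in \Rnq^{\bar{m} \times k}$ entry-wise from a discrete Gaussian over the ring, and outputs the concatenation $\mata = (\bar{\mata},\, \vec{g} - \vec{R}^T \bar{\mata}) \in \Rnq^m$ together with trapdoor $\trap = \vec{R}$.

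To show that $\mata$ is within negligible statistical distance of uniform, I would invoke the ring regularity lemma for power-of-two cyclotomics (as in Lyubashevsky--Peikert--Regev or Stehl\'e--Steinfeld): when $\bar{m} = \Omega(\log q)$ and the Gaussian width exceeds the smoothing parameter of the dual ideal, the joint distribution $(\bar{\mata},\, \vec{R}^T \bar{\mata})$ is statistically close to uniform on $\Rnq^{\bar{m}} \times \Rnq^k$. Subtracting the fixed $\vec{g}$ is a bijection, so $\mata$ itself is statistically close to uniform. The hypothesis $m = \Omega(\log q)$ is precisely what enables this step.

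For inversion, write $y = \mata\cdot s + \mate = (y_1, y_2)$ with $y_1 \in \Rnq^{\bar{m}}$, $y_2 \in \Rnq^k$, and $\mate = (\mate_1, \mate_2)$ likewise. Using $\trap = \vec{R}$, $\inv$ computes $y' = \vec{R}^T y_1 + y_2$, which telescopes algebraically to $y' = \vec{g}\cdot s + \vec{e}'$ with $\vec{e}' = \vec{R}^T \mate_1 + \mate_2$, and then applies the gadget inversion described above. Using the standard ring-multiplication bound (each multiplication inflates the norm by at most a factor of $\sqrt{n}$) together with Gaussian tail bounds giving $\norm{\vec{R}} = O(\sqrt{\bar{m}\,n})$ with overwhelming probability, one shows that each coefficient of each entry of $\vec{e}'$ is of size $O(\sqrt{n}\cdot\norm{\vec{R}}\cdot\norm{\mate})$. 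With $\norm{\mate} \le q/(C_T\sqrt{n\log q})$ for a large enough universal constant $C_T$, every coefficient of $\vec{e}'$ stays strictly below $q/4$ and gadget inversion recovers $s$.

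The main obstacle is the careful bookkeeping of norm expansion in $\Rnq$: unlike the integer-matrix setting of [MP12], each multiplication of two ring elements can inflate the norm by a factor of $\sqrt{n}$, and the trapdoor has $\bar{m}\cdot k$ Gaussian ring entries each contributing additional $\sqrt{n}$-tails. Choosing the Gaussian width for $\vec{R}$, the universal constant $C_T$, and the relationship between $m$ and $\log q$ so that both the ring regularity lemma applies and the final error stays below the gadget-inversion threshold is the delicate part; once fixed, the rest is a direct ring-analog of the scalar argument, since $\vec{g}$ acts independently on each of the $n$ coefficients of $s$.
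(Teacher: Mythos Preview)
The paper does not prove this theorem at all: it is stated in the preliminaries as ``Theorem 5.1 of \cite{MP12} in the Ring setting'' and is imported wholesale as a black-box tool, with no accompanying proof or proof sketch. So there is no ``paper's own proof'' to compare your proposal against.

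That said, your outline is a faithful sketch of how the Micciancio--Peikert construction actually works and how it is ported to power-of-two cyclotomics in the literature: the gadget vector $\vec{g}=(1,2,\ldots,2^{k-1})$, the trapdoor $\vec{R}$ hidden via $\mata=(\bar{\mata},\,\vec{g}-\vec{R}^T\bar{\mata})$, statistical uniformity from a ring leftover-hash/regularity lemma, and inversion by reducing to gadget decoding of $\vec{g}\cdot s+\vec{e}'$. The one place to be a bit more careful is the norm accounting: the bound the theorem asks for is $\norm{\mate}\le q/(C_T\sqrt{n\log q})$, and you should check that your chain $\norm{\vec{e}'}_\infty = O(\sqrt{n}\cdot\norm{\vec{R}}\cdot\norm{\mate})$ with $\norm{\vec{R}}=O(\sqrt{\bar m\, n})$ and $\bar m=\Theta(\log q)$ really lands below $q/4$ for a fixed constant $C_T$; as written the factors of $\sqrt{n}$ need to be tracked precisely (ring multiplication, Gaussian tails on $kn\bar m$ coefficients, and the $\ell_2$-to-$\ell_\infty$ passage each contribute), and the exponent of $n$ you end up with should match the $\sqrt{n\log q}$ in the denominator. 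None of this is a gap in the approach, just the ``delicate bookkeeping'' you already flagged.
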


\subsection{Noisy Trapdoor Claw-Free Hash Functions}

In this section we introduce the notion of noisy trapdoor claw-free functions (NTCFs). Let $\X, \Y$ be finite sets and $\K$ a set of keys. For each $k \in K$ there should exist two (efficiently computable) injective functions $f_{k,0}, f_{k,1}$ that map $\X$ to $\Y$, together with a trapdoor $t_k$ that allows efficient inversion from $(b,y)\in \bit \times \Y$ to $f_{k,b}^{-1}(y)\in\X \cup \set{\perp}$. For security, we require that for a randomly chosen key $k$, no polynomial time adversary can efficiently compute $x_0, x_1 \in \X$ such that $f_{k,0}(x_0) = f_{k,1}(x_1)$ (such a pair $(x_0, x_1)$ is called a \emph{claw}). 

Unfortunately, we do not know how to construct such `clean' trapdoor claw-free functions. Hence, as in previous works \cite{BCMVV18,Mah18}, we will use `noisy' version of the above notion. For each $k \in \K$, there exist two functions $f_{k,0}, f_{k,1}$ that map $\X$ to a distribution over $\Y$.

The following definition is taken directly from \cite{BCMVV18}.

\begin{definition}[NTCF family]\label{def:trapdoorclawfree}
	Let $\lambda$ be a security parameter. Let $\X$ and $\Y$ be finite sets.
	 Let $\mathcal{K}_{\mathcal{F}}$ be a finite set of keys. A family of functions 
	$$\mathcal{F} \,=\, \big\{f_{k,b} : \X\rightarrow \mathcal{D}_{\Y} \big\}_{k\in \mathcal{K}_{\mathcal{F}},b\in\{0,1\}}$$
	is called a \emph{noisy trapdoor claw free (NTCF) family} if the following conditions hold:

	\begin{enumerate}
		\item{\textbf{Efficient Function Generation.}} There exists an efficient probabilistic algorithm $\genf$ which generates a key $k\in \mathcal{K}_{\mathcal{F}}$ together with a trapdoor $t_k$: 
		$$(k,t_k) \leftarrow \genf(1^\lambda)\;.$$

		\item{\textbf{Trapdoor Injective Pair.}}  
		\begin{enumerate}
			\item \textit{Trapdoor}: There exists an efficient deterministic algorithm $\invf$ such that with overwhelming probability over the choice of $(k,t_k) \gets \genf(\seclam)$, the following holds: \\
			$$ \text{for all } b\in \{0,1\}, x\in \X \text{ and } y\in \supp(f_{k,b}(x)), \invf(t_k,b,y) = x. $$
			\item \textit{Injective pair}: For all keys $k\in \mathcal{K}_{\mathcal{F}}$, there exists a perfect matching $\R_k \subseteq \X \times \X$ such that $f_{k,0}(x_0) = f_{k,1}(x_1)$ if and only if $(x_0,x_1)\in \R_k$. 
		\end{enumerate}

		\item{\textbf{Efficient Range Superposition.}}
		For all keys $k\in \mathcal{K}_{\mathcal{F}}$ and $b\in \{0,1\}$ there exists a function $f'_{k,b}:\X\to \mathcal{D}_{\Y}$ such that the following hold.
		\begin{enumerate} 
			\item For all $(x_0,x_1)\in \mathcal{R}_k$ and $y\in \supp(f'_{k,b}(x_b))$, $\invf(t_k,b,y) = x_b$ and $\invf(t_k,b\oplus 1,y) = x_{b\oplus 1}$. 
			\item There exists an efficient deterministic procedure $\chkf$ that, on input $k$, $b\in \{0,1\}$, $x\in \X$ and $y\in \Y$, returns $1$ if  $y\in \supp(f'_{k,b}(x))$ and $0$ otherwise. Note that $\chkf$ is not provided the trapdoor $t_k$. 
			\item For every $k$ and $b\in\{0,1\}$, 
			$$ \Es{x\leftarrow_U \X} \big[\,H^2(f_{k,b}(x),\,f'_{k,b}(x))\,\big] \,\leq\, 1/50\;.\footnote{Here $1/50$ can be replaced by an arbitrarily small constant.}$$
			 Here $H^2$ is the Hellinger distance. Moreover, there exists an efficient procedure  $\sampf$ that on input $k$ and $b\in\{0,1\}$ prepares the state
			\begin{equation}
			    \frac{1}{\sqrt{|\X|}}\sum_{x\in \X,y\in \Y}\sqrt{(f'_{k,b}(x))(y)}\ket{x}\ket{y}\;.
			\end{equation}
		\end{enumerate}

		\item{\textbf{Claw-Free Property.}}
		For any PPT adversary $\A$, there exists a negligible function $\negl(\cdot)$ such that the following holds: 
		\begin{align*}
			\Pr\Brackets{(x_0, x_1) \in \R_k  : (k, t_k) \gets \genf(\seclam), (x_0, x_1) \gets \A(k)} \leq \negl(\lambda)
		\end{align*}

	\end{enumerate}

\end{definition} 

\section{Proof of Quantumness Protocol}
We will now present our protocol. Throughout the protocol, we will ignore dependence on the security parameter when clear from context. Let $\F$ be a NTCF family with domain $\X$, range $\Y$ described by the algorithms $\genf, \invf, \chkf, \sampf$. Let $w$ denote the length of bit decomposition of elements of $\X$. Finally, let $H$ be a hash function that maps $\X$ to $\bit$.

\protocol{Proof of Quantumness Protocol}{Protocol for Proof of Quantumness}{fig:supremacy}{
    The protocol is parameterized by a hash function $H : \bit^n \to \bit$ (which will be modeled as a random oracle in the security proof).
    
	\begin{enumerate}
		\item The verifier generates $(k, {\trapf}) \gets  \genf(\seclam)$ and sends $k$ to the prover. 
		\item The prover sends $\lambda$ tuples $\set{(y_i, m_i, d_i)}_{i \in [\lambda]}$. The verifier initializes $\cnt = 0$ and performs the following checks: 
		\begin{enumerate}
			\item It checks that all values in $\set{y_i}_{i}$ are distinct. 
			\item It computes $x_{i,b} = \invf({\trapf}, b, y_i)$ for each $i \in [\lambda]$, $b \in \bit$. Next, it  checks if $m_i = d_i^T \cdot (\J(x_{i,0}) + \J(x_{i,1})) + H(x_{i,0}) + H(x_{i,1})$. If this check passes, it increments the value of $\cnt$ by $1$. 
		\end{enumerate}
		\item If $\cnt > 0.75 \lambda$, the verifier outputs $1$, else it outputs $\perp$. 
	\end{enumerate}
}

\begin{theorem}
	Let $\F$ be a family of NTCF functions satisfying Definition \ref{def:trapdoorclawfree}. Then Protocol \ref{fig:supremacy} satisfies the following properties:
	\begin{itemize}
		\item[-] Completeness: There exists a quantum polynomial-time prover $\P$ and a negligible function $\negl(\cdot)$ such that for all $\lambda \in \N$ and hash functions $H$, $\P$ succeeds in the protocol with probability at least $1 - \negl(\lambda)$. 
		\item[-] Proof of Quantumness: For any PPT (classical) adversary $\A$, there exists a negligible function $\negl(\cdot)$ such that for all $\lambda \in \N$, $\A$ succeeds in the protocol with probability at most $\negl(\lambda)$ where $H$ is modeled as a random oracle.  
 	\end{itemize}
 	
\end{theorem}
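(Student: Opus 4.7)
My plan is to describe an honest quantum prover that, for each of the $\lambda$ parallel rounds, starts from a uniform superposition over $b \in \bit$ and applies $\sampf$ to prepare $\frac{1}{\sqrt{2|\X|}} \sum_{b,x,y} \sqrt{f'_{k,b}(x)(y)} \ket{b}\ket{x}\ket{y}$, then measures the $y$-register to obtain $y_i$. By property~3a of \defref{trapdoorclawfree} the residual state on the $(b,x)$-register is supported only on $\ket{0}\ket{x_0^i}$ and $\ket{1}\ket{x_1^i}$, where $(x_0^i,x_1^i) \in \R_k$ is the unique claw for $y_i$; in the ``clean'' TCF idealization the two amplitudes would both be $1/\sqrt 2$. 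The prover next applies the phase $(-1)^{H(x)}$ using the quantum oracle for $H$ acting on a $\ket{-}$ ancilla, Hadamard-transforms the $w+1$ qubit register, and measures to obtain $(m_i,d_i)$. A direct Fourier expansion shows that in the clean idealization the verifier's equation $m_i = d_i^T(\J(x_0^i)+\J(x_1^i)) + H(x_0^i)+H(x_1^i)$ holds with certainty. To control the NTCF noise I plan to combine property~3c (average Hellinger distance at most $1/50$) with the Hellinger-to-trace-distance lemma of the preliminaries, obtaining trace distance at most $\sqrt{1-(49/50)^2}<0.2$ between the actual and ideal post-$\sampf$ superpositions. Since subsequent quantum operations are contractive in trace distance, the per-round pass probability is at least $0.8$, and independence across rounds plus a Chernoff bound yields $\cnt > 0.75\lambda$ with probability $1 - 2^{-\Omega(\lambda)}$.

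\textbf{Soundness.} For soundness I would assume toward contradiction a classical PPT adversary $\A$ making $Q = \poly(\lambda)$ random-oracle queries and passing with probability $\varepsilon$, and construct a PPT reduction $\B$ to claw-freeness. $\B$ simulates $H$ by lazy sampling, runs $\A(k)$, records the query table, and after $\A$ halts iterates over every pair $(x,x')$ of queried inputs and every $i \in [\lambda]$, calling $\chkf(k,0,x,y_i)$ and $\chkf(k,1,x',y_i)$; by the injectivity implicit in property~3a of \defref{trapdoorclawfree}, any pair passing both checks must be the claw $(x_0^i,x_1^i) \in \R_k$, so $\B$ outputs a valid claw. The total runtime is $O(\lambda Q^2)$ calls to $\chkf$. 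For the analysis I would call index $i$ \emph{good} if $\A$ queried $H$ at both $x_0^i$ and $x_1^i$, and \emph{bad} otherwise. For each bad $i$, at least one of $H(x_0^i), H(x_1^i)$ is uniform and independent of $\A$'s full view and output (via deferred sampling of the random oracle), so the verifier's equation at $i$ holds with probability exactly $1/2$; since the $y_i$'s are distinct and each NTCF branch is injective, distinct bad indices involve distinct unqueried $x$'s and the corresponding Bernoulli events are independent. A Chernoff bound then shows that conditioned on zero good indices, $\Pr[\cnt > 0.75\lambda] \leq 2^{-\Omega(\lambda)}$, so with probability at least $\varepsilon - 2^{-\Omega(\lambda)}$ there exists a good index and $\B$ recovers a claw, contradicting claw-freeness whenever $\varepsilon$ is non-negligible.

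\textbf{Main obstacle.} The hardest step will be rigorously justifying the conditional independence of unqueried hash outputs from $\A$'s transcript \emph{and} its final output tuples. The plan is to invoke deferred lazy sampling: postpone sampling $H(x)$ for any $x$ never queried by $\A$ until after $\A$ halts, which leaves the joint distribution unchanged but manifestly decouples the deferred values from $\A$'s output, so each bad-index equation becomes an independent unbiased coin flip. Given this, the Chernoff estimate over bad indices and the $\chkf$-based extraction are both routine. On the completeness side, the only nontrivial point is that the average-over-$x$ Hellinger bound of property~3c translates directly into a Hellinger bound on the joint $(x,y)$ distributions, which is precisely the input required by the trace-distance lemma in the preliminaries; the rest is a standard concentration estimate.
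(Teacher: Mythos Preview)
Your proposal is correct and follows essentially the same route as the paper. The completeness argument (apply $\sampf$, phase kickback of $H$ onto a $\ket{-}$ ancilla, Hadamard measurement, Hellinger-to-trace-distance bound giving per-round success $\geq 0.8$, then Chernoff) is identical to the paper's, and your direct good/bad-index soundness analysis with deferred lazy sampling and $\chkf$-based claw extraction is exactly the content of the paper's three-experiment hybrid argument (Experiment~1 $\to$ Experiment~2 replaces the trapdoor check by a table lookup plus a fresh coin for unqueried preimages, Experiment~2 $\to$ Experiment~3 is your Chernoff bound on the all-bad event, and Experiment~3 $\to$ claw-freeness is your reduction $\B$), just without the explicit hybrid scaffolding.
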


\subsection{Completeness} 
\label{sec:protocol-completeness}
In this section, we show that the honest (quantum) prover is accepted by the verifier. 

\newcounter{honest-state}[section]
The honest prover receives  NTCF key $k$. It does the following:
\begin{enumerate}
	\item It starts with $\lambda$ copies of the state $\ket{0}\ket{0}\ket{0}\ket{-}$. For each $i \in [\lambda]$, let $\ket{\psi_i} = \ket{0}\ket{0}\ket{0}\ket{-}$. It then applies $\sampf$ to the first three registers of $\ket{\psi_i}$ for each $i$, resulting in the state $\ket{\psi'^{(1)}_i}$, where 
	\begin{equation} 
		\label{real-term}
		\ket{\psi_1'^{(\stepcounter{honest-state}\arabic{honest-state})}}  = \brackets{\frac{1}{\sqrt{2|\X|}}\sum_{x \in \X, y \in \Y, b \in \bit} \sqrt{(f_{k,b}'(x))(y)} \ket{b} \ket{x} \ket{y} } \ket{-}. 
	\end{equation}

	This quantum state is at distance at most $0.2$ from the following quantum state: 
	\begin{equation} 
		\label{ideal-term}
		\ket{\psi_i^{(\arabic{honest-state})}}  = \brackets{\frac{1}{\sqrt{2|\X|}}\sum_{x \in \X, y \in \Y, b \in \bit} \sqrt{(f_{k,b}(x))(y)} \ket{b} \ket{x} \ket{y} } \ket{-}. 
	\end{equation}

	\item Next, it measures the third register, obtaining measurement $y \in \Y$. Let $x_{0}, x_{1} \in \X$ be the unique elements such that $y$ is in the support of $f_{k,b}(x_{b})$. Applying this operation to the state in (\ref{ideal-term}), the resulting state (ignoring the measured register) is 
	\begin{equation}
		\label{ideal-term-2}
		\ket{\psi_i^{(\stepcounter{honest-state}\arabic{honest-state})}} = \brackets{ \frac{1}{\sqrt{2}} \brackets{\ket{0} \ket{x_{0}} + \ket{1}\ket{x_{1}}}} \ket{-}.
	\end{equation} 
	
	\item Let $U_{H}$ be a unitary that maps $\ket{a}\ket{b}$ to $\ket{a}\ket{b+H(a)}$. The prover applies $U_H$ to the second and third register. On applying this operation to the state in \eqref{ideal-term-2}, the new state is 
	\begin{equation}
		\label{ideal-term-3}
		\ket{\psi_i^{(\stepcounter{honest-state}\arabic{honest-state})}} =  \frac{1}{2} \brackets{ \sum_{b,b'} (-1)^{b'} \ket{b}\ket{x_{b}}\ket{b'+H(x_{b})} }.
	\end{equation}
	
	\item The prover then evaluates the function $\J$ on the second register. Applying this to \eqref{ideal-term-3}, the resulting state is 
	\begin{equation}
		\label{ideal-term-4}
		\ket{\psi_i^{(\stepcounter{honest-state}\arabic{honest-state})}} =  \frac{1}{2} \brackets{ \sum_{b,b'} (-1)^{b'} \ket{b}\ket{\J(x_{b})}\ket{b'+H(x_{b})} }. 
	\end{equation}
	
	\item Finally, the prover applies the Hadamard operator to all registers. On applying this to \eqref{ideal-term-4}, this produces the state  (where $h_{b} = H(x_{b})$ and $\bdx_{b} = \J(x_{b})$)
	\begin{align}
		\ket{\psi_i^{(\stepcounter{honest-state}\arabic{honest-state})}} &= \frac{1}{\sqrt{2^{w+4}}} \sum_{b,b' \in \bit} \sum_{\substack{m, m' \in \bit,\\ d \in \bit^w}} (-1)^{m\cdot b + d^T\cdot \bdx_{b} + m'\cdot b' + m'\cdot h_{b} + b'} \ket{m}\ket{d}\ket{m'}\notag \\
		&= 	\frac{1}{\sqrt{2^{w+2}}} \sum_{m\in \bit, d\in \bit^w} \ket{m}\ket{d}\ket{1} \brackets{(-1)^{d^T\cdot \bdx_{0} + h_{0}} + (-1)^{m + d^T\cdot x_{1} + h_{1}}}	\label{ideal-term-5}
	\end{align}
	Upon measurement of the state in (\ref{ideal-term-5}), the output tuple $(m,d,1)$ satisfies $m  = d^T\cdot (\bdx_{0} + \bdx_{1}) + h_{0} + h_{1}$ (with probability $1$). As a result, applying the above operations to $\ket{\psi'^{(1)}_i}$ results in a tuple $(y,m,d)$ that is accepted with probability at least $0.8$. Using a Chernoff bound it is straightforward to argue that there exists a negligible function $\negl(\cdot)$ such that with probability at least $1-\negl(\lambda)$, at least $3/4$ fraction of the tuples in $\set{(y_i, m_i, d_i)}$ pass the verification. 
	\end{enumerate}

\subsection{Proof of Quantumness : Classical Prover's Advantage in the Random Oracle Model}

Here, we will show that if the function $H$ is replaced with a random oracle, then any classical algorithm that has non-negligble advantage in Protocol \ref{fig:supremacy} can be used to break the claw-free property of $\F$. Consider the following security experiment which captures the interaction between a (classical) prover and a challenger in the random oracle model; the challenger represents the verifier in the protocol. 

\paragraph{Experiment 1} In this experiment, the challenger represents the verifier in Protocol \ref{fig:supremacy} and also responds to the random oracle queries issued by the prover.
\begin{enumerate}
	\item The challenger (verifier) chooses an NTCF key $(k,\trapf) \gets \genf(\seclam)$ and sends $k$ to the prover. The prover and challenger have access to a random oracle $H : \bit^n \to \bit$. 

	\item The prover sends $\set{(y_i,m_i,d_i)}_{i \in [\lambda]}$. For each $i \in [\lambda]$, the challenger computes $x_{i,b} \gets \invf(\trapf, b, y_i)$ for $b \in \bit$, queries the random oracle $H$ on $x_{i,0}, x_{i,1}$ and receives $h_{i,0}, h_{i,1}$ respectively. Next, it checks if $m_i = d_i^T\cdot (\J(x_{i,0}) + \J(x_{i,1})) + h_{i,0} + h_{i,1}$. If at least $0.75 \lambda$ tuples satisfy the check, it outputs $1$, else it outputs $\perp$. 
\end{enumerate}

\paragraph{Experiment 2} This experiment is similar to the previous one, except that the challenger implements the random oracle, and does not use the trapdoor for performing the final $\lambda$ checks. 

\begin{enumerate}

	\item The challenger (verifier) chooses an NTCF key $(k,\trapf) \gets \genf(\seclam)$ and sends $k$ to the prover. The challenger also implements the random oracle as follows. It maintains a database which is initially empty. On receiving a query $x$, it checks if there exists a tuple $(x, h)$ in the database. If so, it outputs $h$, else it chooses a random bit $h \gets \bit$, adds $(x,h) $ to the database and outputs $h$.

	\item The prover sends $\set{(y_i,m_i,d_i)}_{i \in [\lambda]}$. On receiving this set from the prover, the challenger does not compute the inverses of $y_i$. Instead, it initializes $\cnt = 0$, and for each $i$, it looks for tuples $(x_{i,0}, h_{i,0})$ and $(x_{i,1}, h_{i,1})$ in the table such that $\chkf(y_i, 0, x_{i,0}) = \chkf(y_i, 1, x_{i,1}) = 1$. If such $(x_{i,0}, x_{i,1})$ do not exist, then the challenger chooses a random bit $r_i$ and sets $\cnt = \cnt + r_i$. Else, it checks if $m_i = d_i^T \cdot (\J(x_{i,0}) + \J(x_{i,1})) + h_{i,0} + h_{i,1}$. If so, it increments $\cnt$. 

	Finally, it checks if $\cnt > 0.75 \lambda$. If so, it outputs $1$, else outputs $\perp$.
\end{enumerate}

\paragraph{Experiment 3} This experiment is identical to the previous one, except that the challenger, after receiving $\set{(y_i, m_i, d_i)}_i$, outputs $\perp$ if for all $i \in [\lambda]$, there does not exist two entries $(x_{i,0}, h_{i,0}), (x_{i,1}, h_{i,1})$ in the database such that $\chkf(y_i, 0, x_{i,0}) = \chkf(y_i, 1, x_{i,1}) = 1$. 

\paragraph{Analysis}: For any classical PPT prover $\A$, let $p_{\A}$ denote the probability that the verifier outputs $1$ in Protocol \ref{fig:supremacy} (when $H$ is replaced with a random oracle), and for $w \in \set{1,2, 3}$, let $p_{\A,w}$ denote the probability that the challenger interacting with $\A$ in Experiment $w$ outputs $1$. From the definition of Experiment 1 it follows that $p_{\A} = p_{\A,1}$. 

\begin{claim}
For any prover $\A$, $p_{\A,1} = p_{\A,2}$. 
\end{claim}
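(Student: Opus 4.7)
The plan is to exhibit Experiment 2 as a perfectly faithful reimplementation of Experiment 1 via two standard simulator moves. First, I would replace the random oracle in Experiment 1 by lazy sampling: the challenger maintains a table of all previously queried $(x,h)$ pairs, and on each fresh query draws a uniform bit $h \gets \bit$ and stores $(x,h)$. This implementation is information-theoretically identical to sampling $H$ uniformly up front, so the joint distribution of the prover's view, the resulting database, and the output tuples $\{(y_i,m_i,d_i)\}_{i \in [\lambda]}$ is unchanged. Hence from this point on I may assume that Experiment 1 also uses lazy sampling, and the two experiments become identical on the prover's side.

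Second, I would analyze the challenger's verification step tuple by tuple. By the trapdoor/injective-pair property together with the support-check property of \defref{trapdoorclawfree}, with overwhelming probability over $(k,\trapf) \gets \genf(\seclam)$ the unique pair $(x_{i,0},x_{i,1})$ with $\chkf(y_i,b,x_{i,b}) = 1$ for both $b \in \bit$ coincides with $(\invf(\trapf,0,y_i), \invf(\trapf,1,y_i))$. Therefore, if both $x_{i,0}$ and $x_{i,1}$ have been queried by the prover, Experiment 2's database search succeeds and returns exactly the same stored values $h_{i,0} = H(x_{i,0})$, $h_{i,1} = H(x_{i,1})$ that Experiment 1's challenger reads off, so the two checks are literally identical. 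Otherwise, Experiment 1's challenger issues its own verification-time query at a point whose hash has not yet been sampled, which under lazy sampling returns a fresh uniform bit independent of everything else. Because the check equation $m_i = d_i^T \cdot (\J(x_{i,0}) + \J(x_{i,1})) + h_{i,0} + h_{i,1}$ is affine in $(h_{i,0}, h_{i,1})$ over $\bit$, a single independent uniform bit on the right-hand side makes the check pass with probability exactly $1/2$; this is precisely mirrored by the fresh coin $r_i$ used in Experiment 2.

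Combining the per-tuple analysis over all $i \in [\lambda]$, the random variable $\cnt$ has the same distribution in both experiments, so in particular $\Pr[\cnt > 0.75\lambda]$ matches and $p_{\A,1} = p_{\A,2}$. The main subtlety I would be careful about is the joint independence of the verification-time coins across tuples: this is where the distinctness check on $\{y_i\}_i$ enforced by the verifier, together with the determinism and injectivity of $\invf(\trapf,b,\cdot)$ on the support, is used to ensure that distinct tuples generate distinct preimage pairs, so that the fresh oracle bits drawn at verification time in Experiment 1 are mutually independent and exchangeable with the independent coins $r_i$ of Experiment 2.
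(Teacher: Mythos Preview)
Your proposal is correct and follows essentially the same approach as the paper's proof: both argue that lazy sampling makes the oracle implementation identical, that when both preimages were queried the two checks coincide, and that when at least one preimage was not queried the verification bit is uniform, with the distinctness of the $y_i$ ensuring independence across indices. You are somewhat more explicit than the paper about the lazy-sampling step and about why the database entry found via $\chkf$ agrees with $\invf(\trapf,b,y_i)$, but the underlying argument is the same.
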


\begin{proof}
The main differences between Experiment 1 and Experiment 2 are that the challenger implements the random oracle, and secondly, after receiving $\set{(y_i, m_i, d_i)}_i$, the challenger does not use the trapdoor for checking. Note that in Experiment 1, if either $x_{i,0}$ or $x_{i,1}$ are not queried to the random oracle $H$, then $H(x_{i,0}) + H(x_{i,1})$ is a uniformly random bit. Moreover, since the $y_i$ values are distinct, if there exist two indices $i,j$ such that both the preimages of $y_i$ and $y_j$ are not queried, then $H(x_{i,0}) + H(x_{i,1})$ is independent of $H(x_{j,0}) + H(x_{j,1})$. As a result, for each index $i$ such that the preimages of $y_i$ are not queried, the value of $\cnt$ is incremented with probability $1/2$. 

In Experiment 2, the challenger checks for pairs corresponding to $x_{i,0}$ and $x_{i,1}$ in the database, and if either of them is missing, it increments $\cnt$ with probability $1/2$. As a result, the probability of $\cnt > 0.75 \lambda$ is identical in both experiments. 
\end{proof}

\begin{claim}
There exists a negligible function $\negl(\cdot)$ such that for any prover $\A$ and any security parameter $\lambda \in \N$, $p_{\A,2} \leq p_{\A,3} + \negl(\lambda)$.
\end{claim}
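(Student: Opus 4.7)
Couple Experiments 2 and 3 so that they use identical randomness for $(k, \trapf)$, for the prover's internal coins, and for the lazy sampling of the random oracle $H$. Under this coupling the two games diverge only on a single event $E$: that at the end of the interaction, for \emph{every} $i \in [\lambda]$, the challenger's database contains no pair of tuples $(x_{i,0}, h_{i,0}), (x_{i,1}, h_{i,1})$ with $\chkf(y_i, 0, x_{i,0}) = \chkf(y_i, 1, x_{i,1}) = 1$. When $\neg E$ holds, the two experiments execute identical code on identical randomness; when $E$ holds, Experiment 3 outputs $\perp$ while Experiment 2 may still output $1$. Hence it suffices to show
$$ \Pr\Brackets{E \,\land\, \text{Experiment 2 outputs } 1} \leq \negl(\lambda).$$

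Conditioned on $E$, every index $i$ falls into the ``random bit'' branch of Experiment 2, so the final count is exactly $\cnt = \sum_{i=1}^{\lambda} r_i$, where each $r_i$ is a fresh uniform bit drawn by the challenger \emph{after} the prover's entire transcript (including all of its random-oracle queries) has been fixed. In particular, the $r_i$ are mutually independent and independent of the event $E$. By Hoeffding's inequality,
$$ \Pr\Brackets{\sum_{i=1}^\lambda r_i > 0.75\,\lambda} \,\leq\, e^{-\lambda/8}, $$
which is negligible in $\lambda$. Combining these two observations gives $p_{\A,2} \leq p_{\A,3} + e^{-\lambda/8}$, as claimed.

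\textbf{Main obstacle.} This step is not really a cryptographic argument; the reduction to the claw-free property happens only in the subsequent claim bounding $p_{\A,3}$. The only subtlety here is to pin down that the bits $r_i$ used in Experiment 2 are drawn freshly by the challenger after the prover has committed to its output, so that one can condition on $E$ (which is a function of the prover's transcript and the lazily sampled oracle answers) without affecting the distribution of the $r_i$'s. Once this independence is made explicit in the proof, the Hoeffding estimate concludes.
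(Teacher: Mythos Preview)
Your proof is correct and follows essentially the same approach as the paper's own proof: identify the unique event on which the two experiments differ (namely, that no index $i$ has both preimages present in the oracle database), observe that under this event $\cnt$ in Experiment~2 is a sum of $\lambda$ fresh unbiased coins, and apply a Chernoff/Hoeffding tail bound. The paper's version is terser---it simply says ``Using Chernoff bounds, we can show that this happens with negligible probability''---whereas you make explicit both the coupling, the independence of the $r_i$'s from the event $E$, and the concrete bound $e^{-\lambda/8}$; these are welcome clarifications but not a different argument.
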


\begin{proof}
The only difference between these two experiments is that the challenger, at the end of the experiment, outputs $\perp$ if for all $i \in [\lambda]$, either $x_{i,0}$ or $x_{i,1}$ has not been queried to the random oracle. The only case in which the challenger outputs $1$ in Experiment 2 but outputs $\perp$ in Experiment 3 is when for all $i \in [\lambda]$, either $x_{i,0}$ or $x_{i,1}$ has not been queried, but there exist $t \geq 0.75 \lambda$ indices $\set{i_1, \ldots, i_t}$ such that $\cnt$ was incremented. Using Chernoff bounds, we can show that this happens with negligible probability.
\end{proof}

\begin{claim}
 Assuming $\F$ is a secure claw-free trapdoor family, for any PPT prover $\A$, there exists a negligible function $\negl(\cdot)$ such that for all $\lambda \in \N$, $p_{\A,3}(\lambda) \leq \negl(\lambda)$. 
\end{claim}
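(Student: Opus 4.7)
The plan is to reduce to the claw-free property of $\F$: from any classical PPT prover $\A$ with non-negligible $p_{\A,3}(\lambda)$ I will construct an adversary $\B$ that outputs a valid claw with comparable probability. On input an NTCF key $k$ sampled by the claw-free challenger, $\B$ forwards $k$ to $\A$ and simulates the random oracle by lazy sampling, maintaining a database $T$ of all $(x,h)$ pairs it has answered. When $\A$ returns $\set{(y_i, m_i, d_i)}_{i \in [\lambda]}$, $\B$ walks through $T$ and for each $i \in [\lambda]$ looks for entries $(x_0, h_0), (x_1, h_1) \in T$ (possibly with $x_0 = x_1$) satisfying $\chkf(k, 0, x_0, y_i) = \chkf(k, 1, x_1, y_i) = 1$. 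If such a pair is found for some $i$, $\B$ outputs $(x_0, x_1)$; otherwise it aborts. Crucially, $\B$ never needs the trapdoor: both the key distribution and the lazy-sampled oracle responses are available to $\B$, and $\chkf$ is trapdoor-free by assumption.

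The simulation of Experiment~3 is therefore perfect, so the simulated challenger would output $1$ with probability at least $p_{\A,3}(\lambda)$. By construction of Experiment~3, this can only occur if the initial ``$\perp$-check'' is passed, i.e., there exists some index $i$ for which two entries of $T$ certify $y_i$ as a common image of both branches; hence $\B$'s scan finds a candidate pair with probability $\geq p_{\A,3}(\lambda)$.

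The remaining step is to argue that any pair $(x_0, x_1)$ returned by $\B$ really lies in $\R_k$. Let $x_1^{*}$ be the unique matching partner of $x_0$ in $\R_k$; applying property~3(a) of Definition~\ref{def:trapdoorclawfree} to $(x_0, x_1^{*}) \in \R_k$ together with $y_i \in \supp(f'_{k,0}(x_0))$ (certified by $\chkf$) gives $\invf(t_k, 1, y_i) = x_1^{*}$. The symmetric application using the partner $x_0^{*}$ of $x_1$ gives $\invf(t_k, 1, y_i) = x_1$, and determinism of $\invf$ forces $x_1 = x_1^{*}$, i.e., $(x_0, x_1) \in \R_k$. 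Thus $\B$ outputs a genuine claw with probability $\geq p_{\A,3}(\lambda)$, forcing $p_{\A,3}$ to be negligible by claw-freeness.

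The hard part, I expect, is exactly this last verification step: the pair retrieved from the database is certified only by the efficient $\chkf$ predicate, with no a priori matching information, so one must combine the perfect-matching structure of $\R_k$ with the determinism of $\invf$ on $\supp(f'_{k,b}(\cdot))$ to conclude that the pair is a true claw. Everything else is a routine lazy-sampling perfect-simulation argument together with the elementary observation that $\B$'s winning event is implied by the Experiment~3 challenger passing its $\perp$-check.
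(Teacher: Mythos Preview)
Your proof is correct and follows essentially the same reduction as the paper: forward the key, lazily simulate the random oracle, and after $\A$ returns its tuples scan the query database for a pair $(x_0,x_1)$ certified by $\chkf$ on both branches of some $y_i$. The paper's proof is actually terser than yours---it stops once such a pair is found and does not spell out why it lies in $\R_k$; your final paragraph using property~3(a) and the determinism of $\invf$ fills exactly that gap, so if anything your argument is more complete than the original.
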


\begin{proof}
Suppose there exists a PPT prover $\A$ and a non-negligible function $\epsilon(\cdot)$ such that for all $\lambda \in \N$, the challenger outputs $1$ with probability $\epsilon = \epsilon(\lambda)$ in Experiment 3. This means with probability at least $\epsilon$, there exists an index $i\st \in [\lambda]$ such that $\A$ queries the random oracle on $x_{i\st,0}, x_{i\st,1}$ and finally outputs $\set{(y_i, m_i, d_i)}_{i}$ such that $\chkf(y_{i\st}, 0, x_{i\st, 0}) = \chkf(y_{i\st}, 1, x_{i\st,1}) = 1$. 

We will construct a reduction algorithm $\B$ that breaks the claw-free property of $\F$ with probability $\epsilon$. The reduction algorithm receives the key $k$ from the NTCF challenger, which it forwards to $\A$. Next, $\A$ makes polynomially many random oracle queries, which are answered by the reduction algorithm by maintaining a database. Eventually, $\A$ sends $\set{(y_i,m_i,d_i)}$. The reduction algorithm checks if there exist tuples $(x_{i\st, 0}, h_{i\st, 0})$ and $(x_{i\st,1}, h_{i\st,1})$ in its database such that $\chkf(y_{i\st}, 0, x_{i\st,0}) = \chkf(y_{i\st}, 1, x_{i\st,1}) = 1$. If so, it sends $(x_{i\st,0}, x_{i\st,1})$ to the NTCF challenger. 
\end{proof}

Using the above claims, it follows for every classical prover $\A$, there exists a negligible function $\negl(\cdot)$ such that for all $\lambda \in \N$, $p_{\A} \leq \negl(\lambda)$. 

\section{Construction of NTCFs based on Ring LWE}

Our construction is similar to the one in \cite{BCMVV18}. Let $\lambda$ be the security parameter, $n = 2^{\ceil{\log \lambda}}$. The following are other parameters chosen by our scheme (we will ignore dependence on security parameter/$n$): 
\begin{itemize}
	\item Ring $R = \Z[X]/(X^n + 1)$.  
	\item Modulus $q = \poly(n)$, $\Rq = R/qR$
	\item $m = \Omega(\log q)$ : determines the dimension of range space
	\item $\chi$: the noise distribution. In our case, $\chi$ is a Discrete Gaussian over $\Z^{n}$ with parameter $B_V$.
	\item $B_P$ : the noise bound for function evaluation. We require the following constraints on $B_P$: 
	\begin{itemize}
		\item $B_P \geq \Omega(n\cdot m \cdot B_V)$
		\item $2 B_P \sqrt{n\cdot m} \leq q/(C_T \cdot \sqrt{n \log q}) $ for some constant $C_T$
	\end{itemize}
\end{itemize}

The domain is $\X = \Rq$, and range is $\Y = \Rq^m$. 

Each function key $k = (\mata, \mata\cdot s + \mate)$, where $s \in \Rq$, $a_i, e_i \in \Rq$ for all $i \in [m]$, $\mata = [a_1 \ldots a_m]^T$, $\mate = [e_1 \ldots e_m]^T$. For $b \in \bit, x \in \X, k = (\mata, \mata\cdot s + \mate)$, the density function $f_{k,b}(x)$ is defined as follows: 
\begin{align}
	\forall \vec{y} \in \Y, (f_{k,b}(x))(\vec{y}) = D_{\Z^{n\cdot m}, B_P}(\vec{y} - \mata \cdot x - b \cdot \mata \cdot s),
\end{align} 
where $\vec{y} = [y_1 \ldots y_m]^T$, and each $\vec{y}_i$ can be represented as an element in $\Zq^n$ (using the coefficient representation); similarly for $\mata\cdot x$ and $\mata\cdot s$.

We will now show that each of the properties of NTCFs hold. 

\begin{enumerate}
	\item \textbf{Efficient Key Generation}: The key generation algorithm $\genf(\seclam)$ first chooses $(\mata, \trap) \gets \gentrap(1^n, 1^m, q)$, $s \gets \Rq$ and $\mate \gets \chi^m$. It outputs key $k = (\mata, \mata \cdot s + \mate)$, and the trapdoor is $\trapf = (\trap, k, s)$.  

	\item \textbf{Trapdoor Injective Pair}: 
	\begin{enumerate}
		\item \emph{Trapdoor}: For $k = (\mata, \mata \cdot s + \mate)$, $x \in \X$ and $b\in \bit$, the support of $f_{k,b}(x)$ is  $$\supp(f_{k,b}(x)) = \set{ \vec{y} \in \Y : \norm{\vec{y} - \mata\cdot x - b \cdot \mata\cdot s} \leq B_P\sqrt{n\cdot m}}$$ 

		The inversion algorithm $\invf$ takes as input the lattice trapdoor $\trap$, $b \in \bit$, $\vec{y} \in \Y$ and outputs $\inv(\trap, \mata, \vec{y}) - b\cdot s$. From Theorem \ref{thm:trapdoor}, it follows that with overwhelming probability over the choice of $\mata$, for all $\vec{y} \in \supp(f_{k,b}(x))$, $\inv(\trap, \mata, \vec{y}) = x + b\cdot s$. Hence, it follows that $\invf(\trapf, b, \vec{y}) = x$.

		\item \emph{Injective Pair}: Let $k = (\mata, \mata\cdot s + \mate)$. From the construction, it follows that $f_{k,0}(x_0) = f_{k,1}(x_1)$ if and only if $x_1 = x_0 + s$. Hence the set $\R_k = \set{(x, x+s) : x \in \X}$. 

	\end{enumerate}

	\item \textbf{Efficient Range Superposition}: The function $f'_{k,0}$ is same as $f_{k,0}$, while $f'_{k,1}$ is defined as follows (recall $k = (\mata, \mata \cdot s + \mate)$): 
	\begin{align}
		\forall \vec{y} \in \Y, (f'_{k,1}(x))(\vec{y}) = D_{\Z^{n\cdot m}, B_P}(\vec{y} - \mata \cdot x - (\mata \cdot s + \mate))
	\end{align} 
	\begin{enumerate}
		\item Since $f'_{k,0} = f_{k,0}$, it follows that for all $(x_0, x_1) \in \R_k$ and $\vec{y} \in \supp(f'_{k,0}(x_0))$, $\invf(\trapf, 0, \vec{y}) = x_0$ and $\invf(\trapf, 1, \vec{y}) = x_1$. We need to show the same for $f'_{k,1}$; that is, for all $(x_0, x_1) \in \R_k$ and $\vec{y} \in \supp(f'_{k,1}(x_1))$, $\invf(\trapf, 1, \vec{y}) = x_1$ and $\invf(\trapf, 0, \vec{y}) = x_0$. For all $x \in \X$, $$ \supp(f'_{k,1}(x)) = \set{\vec{y} \in \Y : \norm{\vec{y} - \mata\cdot x - \mata\cdot s - \mate} \leq B_P\sqrt{n\cdot m}}$$

		Hence for any $\vec{y} \in \supp(f'_{k,1}(x))$,  $\norm{\vec{y} - \mata\cdot x_1 - \mata \cdot s} \leq  2B_P \sqrt{n\cdot m}$; using Theorem \ref{thm:trapdoor}, we can conclude that $\invf(\trapf, 1, \vec{y}) = x_1$. 

		\item The procedure $\chkf$ takes as input $\vec{y} \in \Y, k = (\mata, \vec{v}), b \in \bit, x \in \X$ and checks if $\norm{\vec{y} - \mata\cdot x - b\cdot \vec{v}} \leq B_P \sqrt{n\cdot m}$. 

		\item The definition of $\sampf$ is identical to the one in \cite{BCMVV18}, and the Hellinger distance can be bounded by $1-e^{-\frac{2\pi m \cdot n \cdot B_V}{B_P}}$.
		From our setting of parameters, this quantity is at most $1/50$.

	\end{enumerate}
	\item \textbf{Claw-Free Property} Suppose there exists an adversary $\A$ that, on input $k = (\mata, \mata \cdot s + \mate)$ can output $(x_0, x_1) \in \R_k$. Then this adversary can be used to break the Ring LWE assumption, since $x_1 - x_0 = s$.
\end{enumerate}	 

\bibliographystyle{alpha}

\bibliography{refs}

\end{document}